\def\maxwidth#1{\ifdim\Gin@nat@width>#1 #1\else\Gin@nat@width\fi}
\newcommand{\ignorecite}[1]{{\@fileswfalse\cite{#1}}}%
\newcommand*{\JJ}{\mathscr{J}}
\newcommand*{\GG}{\mathscr{G}}
\theoremstyle{definition}
\newtheorem{theorem}{Theorem}
\newtheorem{lemma}{Lemma}
\newtheorem{property}{Property}
\begin{document}

\title{Direct Interaction Approximation for generalized stochastic models in the turbulence problem}
\author{B.K. Shivamoggi and N. Tuovila \\ {\small University of Central Florida} \\ {\small Orlando, FL 32816-1364}}
\date{}
\maketitle

\begin{abstract}
    The purpose of this paper is to consider the application of the direct interaction approximation (DIA) developed by Kraichnan \ignorecite{Kraichnan-1961}, \ignorecite{Kraichnan-1965} to generalized stochastic models in the turbulence problem.  Previous developments (Kraichnan \ignorecite{Kraichnan-1961}, \ignorecite{Kraichnan-1965}, Shivamoggi et al. \ignorecite{BKS}, Shivamoggi and Tuovila \ignorecite{BT}) were based on the Boltzmann-Gibbs prescription for the underlying entropy measure, which exhibits the extensivity property and is suited for ergodic systems.  Here, we consider the introduction of an influence bias discriminating rare and frequent events explicitly, as it behooves non-ergodic systems, which is dealt with by a using a Tsallis type \ignorecite{Tsallis-intro} autocorrelation model with an underlying non-extensive entropy measure.  As an example, we consider a linear damped stochastic oscillator system, and describe the resulting stochastic process.  The non-perturbative aspects excluded by Keller's perturbative procedure \ignorecite{Keller} are found to be minimized in the white-noise limit.  In the opposite limit, the physical variances between the random process models don't seem to materialize, and the Uhlenbeck-Ornstein and Tsallis type models are found to yield the same result.  In the process, we also deduce some apparently novel mathematical properties of the stochastic models associated with the present investigation -- the gamma distribution (Andrews et al. \ignorecite{APS1989}, \ignorecite{AS1990}) and the Tsallis non-extensive entropy \ignorecite{Tsallis-intro}.
\end{abstract}

\section{Introduction}

The \emph{direct-interaction approximation} (DIA) advanced by Kraichnan \cite{Kraichnan-1961}, \cite{Kraichnan-1965}\footnote{Notwithstanding several important insights afforded by the DIA into turbulence dynamics, some issues still hamper the DIA -- one such issue is the unphysical effect of the large-scale driving mechanisms on the energy transfer in the inertial range (Kraichnan \cite{Kraichnan-1964}). Another issue concerns the comparison of the predictions of the DIA with high-Reynolds number experiments (Mou and Weichman \cite{MW-93}, \cite{MW-95}, Eyink \cite{Eyink}).} is currently the only fully self-consistent analytical theory of turbulence in fluids. The formal application of the DIA to a statistical problem is justified only when the nonlinear effects are weak. Since this is not the case in the turbulence problem, there is a need to rationalize the DIA prior to application via a model equation which can be solved exactly. Kraichnan \cite{Kraichnan-1965} applied the DIA to the model equation and modified it by a process, where the non-linear terms are unrestricted, into another equation for which the DIA gives the exact statistical average.

Several mathematical issues associated with the application of the DIA to Kraichnan's \cite{Kraichnan-1961} stochastic model equation were explored by Shivamoggi, et al. \cite{BKS} and Shivamoggi and Tuovila \cite{BT}. Kraichnan \cite{Kraichnan-1961} considered a stochastic model with fluctuations having short-range autocorrelations. Shivamoggi, et al. \cite{BKS} considered instead fluctuations following an \emph{Uhlenbeck-Ornstein} process \cite{UO} with wide-ranging autocorrelations. The stochastic models of Kraichnan \cite{Kraichnan-1961} and Shivamoggi et al. \cite{BKS} are based on a \emph{Markovian} process\footnote{The Markovian property is the statistical analog in random processes of the nearest-neighbor interactions in statistical physics (Lifshits, et al. \cite{LGP}).}. Shivamoggi and Tuovila \cite{BT}, on the other hand, considered a stochastic model based on an \emph{ultra non-Markovian process}\footnote{The non-Markovian property corresponds to long-range interactions in statistical physics, an example of which is Weiss' \cite{Weiss} theory of ferromagnetism.}. All these developments were based on the \emph{Boltzmann-Gibbs} prescription for the underlying entropy measure, which exhibits the \emph{extensivity} property and is suited for \emph{ergodic} systems. However, \emph{non-ergodic}, non-Markovian systems need the introduction of an influence bias discriminating \emph{rare} and \emph{frequent} events explicitly. In order to address this issue, Tsallis \cite{Tsallis-intro} gave a prescription to generalize the concept of entropy exhibiting a \emph{non-extensivity} property to cover such systems.

In this paper, we consider the application of the DIA to a linear damped stochastic oscillator system using a Tsallis type \cite{Tsallis-intro} autocorrelation model with an underlying non-extensive entropy measure, and describe the resulting stochastic process.  In the process, we deduce some apparently novel mathematical properties of the stochastic models associated with the present investigation -- the gamma distribution (Andrews et al. \cite{APS1989}, \cite{AS1990}) and the Tsallis non-extensive entropy \cite{Tsallis-intro}.

\section{Tsallis entropy model}

The Boltzmann-Gibbs entropy of a system with $W$ microstates, each of which has probability $p_i$, is given by the formula
\begin{equation}\label{eq:BG}
    S_{BG} = -k_B\sum_{i=1}^W p_i \ln p_i
\end{equation}
with the normalization condition
\begin{equation}
    \sum_{i=1}^W p_i = 1.
\end{equation}
Here, $k_B$ is the Boltzmann constant, which we will put equal to unity for convenience. In the case of equiprobability, $p_i = 1/W, \forall i$, eq\eqref{eq:BG} then reduces to the Boltzmann ansatz,

\begin{equation} \label{eq:BG-simple}
    S_{BG} = \ln W.
\end{equation}
The Boltzmann-Gibbs entropy is  nonnegative, concave, experimentally robust, and extensive. The latter property implies that for two independent statistical systems $A$ and $B$,

\begin{equation} \label{eq:extensive}
    S_{BG}(A+B) = S_{BG}(A) + S_{BG}(B).
\end{equation}

Just as $S_{BG}$ is postulated, so is its generalization. Tsallis \cite{Tsallis-intro} proposed the following model for entropy (see \ref{appendix1} for a mathematical motivation of this),

\begin{equation}\label{eq:S_q}
    S_q = \frac{1-\sum_{i=1}^W p_i^q}{q-1}
\end{equation}
where $q$ is the parameter characterizing the \emph{non-extensivity} of the entropy. \eqref{eq:S_q} yields, in the Boltzmann-Gibbs (\emph{extensive entropy}, $q\to 1$) limit,

\begin{equation}
    S_1 = \lim_{q\to 1}S_q = -\sum_{i=1}^{W}p_i \ln p_i. \tag{\ref{eq:BG}}
\end{equation}
On maximizing $S_q$, subject to the constraints,
\begin{itemize}
    \item generalized normalization: \begin{equation}
        \sum_i P_i(q) = 1
    \end{equation}
    \item generalized energy conservation: \begin{equation}
        \sum_i P_i(q)E_i = \text{const} \equiv U_q
    \end{equation}
\end{itemize}
where $P_i(q)$ are the \emph{escort} probabilities,

\begin{equation}
    P_i(q) \equiv \frac{p_i^q}{\sum_i p_i^q}
\end{equation}
and $E_i$ is the energy of the $i$th state, we obtain

\begin{equation}\label{eq:max-Sq}
    \overline{p_i}= \frac{1}{Z_q}\biggl[ 1-(1-q)\beta E_i \biggr]^{\frac{1}{1-q}}.
\end{equation}
Here, $\beta$ is the Lagrange multiplier, and $Z_q$ is the partition function ensuring normalization, such that

\begin{equation}
    \sum_i \overline{p_i} = 1.
\end{equation}
In the Boltzmann-Gibbs 
limit, \eqref{eq:max-Sq} gives the usual result,

\begin{equation}\label{eq:BG-limit}
    \lim_{q\to 1} \overline{p_i} \sim e^{-\beta E_i},
\end{equation}
so \eqref{eq:max-Sq} may be viewed as a one-parameter generalization of the Boltzmann-Gibbs formula \eqref{eq:BG-limit}.
On the other hand, for $q < 1$, \eqref{eq:max-Sq} needs the compatibility condition, 

\begin{equation}
    1 - (1-q)\beta E_i < 0 \, : \, \overline{p_i}=0.
\end{equation}

\section{Tsallis autocorrelation model via compound statistics of the Uhlenbeck-Ornstein model for a random process}
Consider a random process described by a real, centered, stationary Gaussian function $b(t)$, which follows the \emph{Uhlenbeck-Ornstein} model \cite{UO}.\footnote{In fact, Doob's theorem \cite{Doob} stipulates that a random process which is stationary, Gaussian, and Markovian follows the Uhlenbeck-Ornstein model.}   We then follow the ansatz of Wilk and Wlodarcyzk \cite{WW} and Beck \cite{Beck} for the Boltzmann distribution, and account for a slowly varying environment of the latter model by stipulating the governing parameter $\lambda$ (the \emph{inverse autocorrelation time}) thereof to follow a \emph{gamma distribution}\footnotemark, hence compounding the statistics underlying the Uhlenbeck-Ornstein model.

\footnotetext{The gamma distribution was previously used in hydrodynamic turbulence to model the kinetic energy dissipation field (Andrews et al. \cite{APS1989}) and scalar-variance dissipation field (Andrews and Shivamoggi \cite{AS1990}).}

Consider the Uhlenbeck-Ornstein model for the \emph{conditional} autocorrelation function of this stationary random process,

\begin{equation}
    r(t', t'', \lambda) \equiv \langle b(t')b(t'') \rangle = \sigma^2 e^{-\lambda(t'-t'')}.
\end{equation}
Suppose the inverse autocorrelation time $\lambda$ is also a random variable distributed according to a gamma distribution,

\begin{equation}\label{eq:lambda^*-density}
    f(\lambda;c) = \frac{1}{a\Gamma(c)}\bigg( \frac{\lambda}{a} \bigg)^{c-1}e^{-\lambda/a}, \,a > 0, \, c > 0.
\end{equation}
\eqref{eq:lambda^*-density} leads to the following expression for the mean $\lambda_0$ and the variance $\sigma^2$ of the random variable $\lambda$,

\begin{subequations}\label{eq:mean-variance}
    \begin{equation}\label{eq:bc}
        \lambda_0 \equiv \langle \lambda \rangle = \int_0^{\infty}\lambda f(\lambda;c)d\lambda = ac
    \end{equation}
    \begin{equation}
        \sigma^2 \equiv \langle \lambda^2 \rangle - \langle \lambda \rangle^2 = a^2c
    \end{equation}
from which, we obtain for the parameters $a$ and $c$,
\begin{equation}\label{eq:ac}
    a = \frac{\sigma^2}{\lambda_0}, \,\, c = \frac{\lambda_0^2}{\sigma^2}.
\end{equation}
\end{subequations}
The \emph{marginal} distribution is then given by averaging the conditional distribution $r(t',t'',\lambda)$ over $\lambda$,

\begin{equation}\label{eq:marginal}
    \begin{split}
        R(t', t'') &= R_q(\tau) \equiv \int_0^{\infty}r(\tau, \lambda)f(\lambda;c)d\lambda \\
        &= \frac{{\sigma}^2}{\Gamma(c)}\int_0^{\infty} \biggl( \frac{\lambda}{a} \biggr)^{c-1} e^{-(1+a\tau)(\lambda/a)}d\left(\frac{\lambda}{a}\right) \\
        &= \frac{{\sigma}^2}{(1+a\tau)^c}, \,\, \tau \equiv t'-t''.
    \end{split}
\end{equation}
Putting (Beck \cite{Beck}),

\begin{equation}\label{eq:c}
    c \equiv \frac{1}{q-1},
\end{equation}
\eqref{eq:marginal} leads to the \emph{Tsallis} \cite{Tsallis-intro} type autocorrelation function,

\begin{equation}\label{eq:B-tsallis}
    R_q(\tau) = {\sigma}^2[1 - (1-q)\lambda_0 \tau]^{\frac{1}{1-q}}
\end{equation}
On writing \eqref{eq:B-tsallis} in terms of the \emph{q-exponential},
\begin{subequations}
\begin{equation}
    R_q(\tau) \equiv \sigma^2 e_q^{-\lambda_0\tau},
\end{equation}
and noting that
\begin{equation}
    \lim_{q\to 1} e_q^{-\lambda_0\tau} = e^{-\lambda_0\tau},
\end{equation}
$q$ is observed to play the role of the \emph{non-extensivity parameter}.
\end{subequations} Furthermore, we have from \eqref{eq:mean-variance} and \eqref{eq:c},

\begin{equation}
    \sigma^2 = (q-1)\lambda_0^2
\end{equation}
which implies,

\begin{equation}
    q \equiv \frac{\langle \lambda^2 \rangle}{\lambda_0^2} > 1.
\end{equation}
On the other hand, we have, in the \emph{zero-dispersion} limit (see \ref{appendix3}),

\begin{equation}\label{eq:gamma-delta}
    \lim_{c\to\infty}f(\lambda;c) = \delta(\lambda-\lambda_0).
\end{equation}
We therefore obtain the Uhlenbeck-Ornstein autocorrelation function in the \emph{extensive entropy} limit, ($q \to 1$),

\begin{equation}
    R_1(\tau) = \lim_{q\to 1} R_q(\tau) \equiv \lim_{q\to 1} \sigma^2e_q^{\lambda_0\tau} = {\sigma}^2 e^{-\lambda_0\tau}
\end{equation}
with the inverse autocorrelation time given by $\lambda_0$.

Some mathematical properties of the marginal distribution \eqref{eq:B-tsallis} are discussed in \ref{appendix2} and \ref{appendix4}.

\section{Linear damped stochastic oscillator}

Consider the linear damped stochastic oscillator described by the initial value problem (IVP) (Zwanzig \cite{Zwanzig}),
\begin{subequations}\label{eq:model}
    \begin{align}
        [\frac{d}{dt} + ib(t)]\hat{G}(t) + \int_0^t \Gamma(t-t')\hat{G}(t') dt' = \delta(t), \, \hat{G}(0)=1
    \end{align}
    where $\Gamma(t)$ is a \textit{history-dependent} damping coefficient,
    \begin{align}
        \Gamma(t) = \nu e^{-\mu t},\,\, \nu \text{ and } \mu > 0.
    \end{align}
\end{subequations}

In the limit $\mu \to \infty$, with $\displaystyle \lim_{\mu \to \infty} \frac{\nu}{\mu} = $ const, the damping process displays zero memory and becomes Markovian, explored first by Kraichnan \cite{Kraichnan-1961} and then by Shivamoggi, et al. \cite{BKS}. In this case, the IVP \eqref{eq:model} leads to
\begin{equation}\label{eq:model-Markov}
    [\frac{d}{dt} + \nu + ib(t)]\hat{G}(t) = \delta(t), \, \hat{G}(0)=1.
\end{equation}
On the other hand, in the limit $\mu \to 0$, the damping process displays infinite memory and becomes ultra non-Markovian. The IVP \eqref{eq:model} then leads to:
\begin{equation}\label{eq:model-nonMarkov}
    [\frac{d}{dt} + ib(t)]\hat{G}(t) + \nu\int_0^t \hat{G}(t') dt' = \delta(t), \, \hat{G}(0)=1
\end{equation}
explored previously by Shivamoggi and Tuovila \cite{BT}.

In both cases, $b(t)$ is a real, centered, stationary Gaussian random function of $t$ described by the Tsallis type \cite{Tsallis-intro} generalization of the Uhlenbeck-Ornstein model \cite{UO},
\begin{equation}\label{eq:autocorr}
    \langle b(t)b(t') \rangle = \sigma^2 e_q^{-\lambda (t-t')} \equiv \sigma^2 [1 - \lambda \ell (t-t')]^{\frac{1}{\ell}}, \quad \ell \equiv 1-q.
\end{equation}
Figure \ref{fig:schematic} shows the schematic of the theoretical considerations that follow below.

\begin{sidewaysfigure}[p]
    \centering
    \includegraphics[width=\maxwidth{\textwidth}]{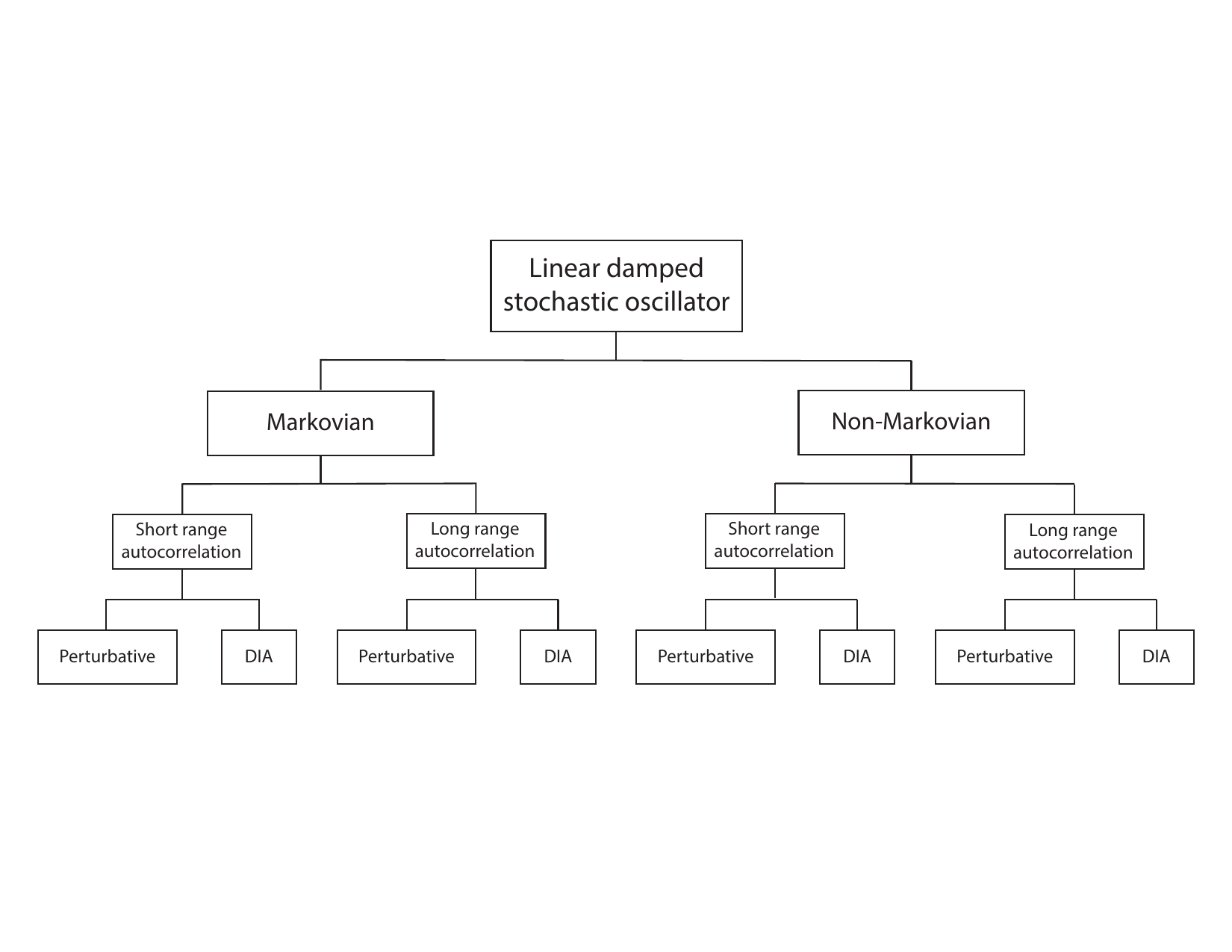}
    \caption{The schematic of the present discussion of the linear damped stochastic oscillator.}\label{fig:schematic}
    \vspace{1in}
\end{sidewaysfigure}

\section{The Markovian Model}

\subsection{The long-range autocorrelation case}\label{sec:long-markov}

We consider first the case where the inverse autocorrelation time $\lambda$ is small.
\subsubsection{Perturbative procedure}\label{sec:black-Markov-perturb}

Applying Keller's perturbation procedure \cite{Keller}, the IVP \eqref{eq:model-Markov} becomes
\begin{equation}\label{eq:perturb-sm}
    \bigg( \frac{d}{dt} + \nu \bigg)G(t) + \int_0^t \langle b(t)b(t') \rangle e^{-\nu(t-t')} G(t') dt' = \delta(t), \qquad G(0) = 1.
\end{equation}
Putting 
\begin{equation}\label{eq:transformtoJ}
    G(t) = e^{-\nu t}J(t)
\end{equation}
and using the Tsallis type \cite{Tsallis-intro} autocorrelation model \eqref{eq:autocorr}, the IVP \eqref{eq:perturb-sm} becomes

\begin{equation}\label{eq:Markov-perturb-J}
    \frac{dJ}{dt} + \sigma^2\int_0^t [1-\lambda \ell(t-t')]^{\frac{1}{\ell}}J(t')dt' = \delta(t), \enspace J(0)=1.
\end{equation}
For small $\lambda$, using the Taylor series approximation \eqref{eq:smalllambdatsallis} (see \ref{appendix2}), we obtain
\begin{equation}\label{eq:mp-sm}
    \frac{dJ}{dt} + \sigma^2\int_0^t [1-\lambda (t-t')]J(t')dt' = \delta(t), \, J(0)=1.
\end{equation}
Applying the Laplace transform, the IVP \eqref{eq:mp-sm} leads to
\begin{equation}\label{eq:mp-sm-L}
    p\JJ(p) + \sigma^2\left[ \frac{1}{p}\left(1 - \frac{\lambda}{p} \right) \right]\JJ(p) = 1.
\end{equation}
Using the \emph{Pad\'e} approximation, eq\eqref{eq:mp-sm-L} yields the functional equation

\begin{equation}\label{eq:mp-sm-L2}
    p\JJ(p) + \sigma^2 \frac{\JJ(p)}{p + \lambda} = 1,
\end{equation}
which agrees with the results obtain by using the Uhlenbeck-Ornstein \cite{UO} model, given in \cite{BKS}.

\subsubsection{DIA Procedure}\label{sec:black-Markov-DIA}

Application of the DIA procedure (Kraichnan \cite{Kraichnan-1961}) to the IVP \eqref{eq:model-Markov} involves replacing the perturbative expression for the deterministic Green's function $e^{-\nu(t-t')}$ in the IVP \eqref{eq:perturb-sm} by the exact expression $G(t-t')$ (which constitutes a renormalization of the perturbative procedure in Section \ref{sec:black-Markov-perturb}), yielding,

\begin{equation}\label{eq:dia-sm}
    \bigg( \frac{d}{dt} + \nu \bigg)G(t) + \int_0^t \langle b(t)b(t')G(t-t') \rangle G(t') dt' = \delta(t), \, G(0)=1.
\end{equation}
Using the \textit{weak statistical-dependence} (also called \emph{quasi-normality}) hypothesis (Kraichnan \cite{Kraichnan-1958}, \cite{Kraichnan-1959}), we may write

\begin{equation}\label{eq:quasi-normal}
    \langle b(t)b(t')G(t-t') \rangle = \langle b(t)b(t') \rangle \langle G(t-t') \rangle.
\end{equation}
Furthermore, using \eqref{eq:autocorr} and \eqref{eq:transformtoJ}, the IVP \eqref{eq:dia-sm} becomes

\begin{equation}\label{eq:markov-dia}
    \frac{dJ}{dt} + \sigma^2\int_0^t [1-\lambda \ell(t-t')]^{\frac{1}{\ell}}J(t-t')J(t')dt' = \delta(t), \enspace J(0)=1.
\end{equation}
For small $\lambda$, using the Taylor series approximation \eqref{eq:smalllambdatsallis} again, we obtain

\begin{equation}\label{eq:mDIA-sm}
    \frac{dJ}{dt} + \sigma^2\int_0^t [1-\lambda (t-t')]J(t-t')J(t')dt' = \delta(t), \, J(0)=1.
\end{equation}
Applying the Laplace transform, the IVP \eqref{eq:mDIA-sm} leads to

\begin{equation}\label{eq:mDIA-sm-L}
    p\JJ(p) + \sigma^2\left[ \JJ(p) + \lambda\JJ'(p) \right]\JJ(p) = 1.
\end{equation}
Noting that we may write, for small $\lambda$, 

\begin{equation}\label{eq:Laplace-taylor}
    \JJ(p) + \lambda \JJ'(p) \approx \JJ(p+\lambda)
\end{equation} 
eq\eqref{eq:mDIA-sm-L} leads to the \emph{functional equation},

\begin{equation}
    p\JJ(p) + \sigma^2 \JJ(p+\lambda)\JJ(p) = 1,
\end{equation}
which agrees again with the results of \cite{BKS} for the Markovian case.

\subsection{The short-range autocorrelation case}

In section \ref{sec:long-markov}, we considered the case for which the inverse autocorrelation time $\lambda$ is small. For large values of $\lambda$ (the \textit{white-noise} limit), the Taylor series approximation \eqref{eq:smalllambdatsallis} is no longer valid. Instead, we use, for small $\ell$, the delta function approximation established in Theorem \ref{thm:large-tsallis} in \ref{appendix2}. 
\subsubsection{Perturbative procedure}\label{sec:white-Markov-perturb}

Applying Keller's perturbation procedure \cite{Keller} and using the Tsallis type \cite{Tsallis-intro} autocorrelation model \eqref{eq:autocorr}, the IVP \eqref{eq:model-Markov} becomes
\begin{equation}
    \frac{dJ}{dt} + \sigma^2\int_0^t [1-\lambda \ell(t-t')]^{\frac{1}{\ell}}J(t')dt' = \delta(t), \enspace J(0)=1. \tag{\ref{eq:Markov-perturb-J}}
\end{equation}
For large $\lambda$, using \eqref{eq:largelambdatsallis}, the IVP \eqref{eq:Markov-perturb-J} becomes

\begin{equation}\label{eq:mp-lg}
    \frac{dJ}{dt} + \sigma^2\int_0^t \frac{1}{\lambda}\delta(t-t') J(t')dt' = \delta(t), \, J(0)=1.
\end{equation}
Applying the Laplace transform, the IVP \eqref{eq:mp-lg} leads to
\begin{equation}\label{eq:mp-lg-L}
    p\JJ(p) + \frac{\sigma^2}{\lambda}\JJ(p) = 1,
\end{equation}
which gives
\begin{equation}\label{eq:m-perturb-short}
    \JJ(p) = \frac{1}{p+ \frac{\sigma^2}{\lambda}}.
\end{equation}
Inverting the Laplace transform, \eqref{eq:m-perturb-short} leads to
\begin{equation}\label{eq:perturb-solution}
    J(t) = e^{-\frac{\sigma^2}{\lambda}t},
\end{equation}
showing a fluctuation-induced exponential attenuation (embodying the \emph{fluctuation-dissipation theorem}, Huang \cite{Huang}).

\subsubsection{DIA procedure}\label{sec:white-Markov-DIA}
Application of the DIA procedure entails again replacing the perturbative expression for the deterministic Green's function in \eqref{eq:Markov-perturb-J} with the exact expression $J(t-t')$, which yields

\begin{equation}
    \frac{dJ}{dt} + \sigma^2\int_0^t [1-\lambda \ell(t-t')]^{\frac{1}{\ell}}J(t-t')J(t')dt' = \delta(t), \enspace J(0)=1.\tag{\ref{eq:markov-dia}}
\end{equation}
For large $\lambda$, using \eqref{eq:largelambdatsallis}, the IVP \eqref{eq:markov-dia} becomes

\begin{equation}\label{eq:m-d-lg}
    \frac{dJ}{dt} + \sigma^2\int_0^t \frac{1}{\lambda}\delta(t-t') J(t-t')J(t')dt' = \delta(t), \,J(0)=1,
\end{equation}
which leads to 

\begin{equation}\label{eq:m-d-simplified}
    \frac{dJ}{dt} + \frac{\sigma^2}{\lambda}J(0)J(t) = \delta(t), \, J(0)=1
\end{equation}
and the solution is
\begin{equation}
    J(t) \sim e^{-\frac{\sigma^2}{\lambda}t},
\end{equation}
which is exactly the perturbative result \eqref{eq:perturb-solution}. This indicates that in the \textit{white-noise} limit the non-perturbative aspects excluded by Keller's perturbative procedure \cite{Keller} are minimized.

\subsection{The large time limit}

For large $t$, we need to consider $\lambda$ to be small to ensure that the q-exponential given by \eqref{eq:autocorr} is nonzero. 

Noting that large $t$ corresponds to small $p$,
eq\eqref{eq:mp-sm-L2} leads to

\begin{equation}
    p\JJ(p) + \frac{\sigma^2}{\lambda}\JJ(p) \approx 1
\end{equation}
from which, we obtain

\begin{equation}\label{eq:lgt-markov}
    \JJ(p) \sim \frac{1}{p+\frac{\sigma^2}{\lambda}}.
\end{equation}
Upon inverting the Laplace transform, \eqref{eq:lgt-markov} leads to

\begin{equation}
    J(t) \sim e^{-\frac{\sigma^2}{\lambda}t},
\end{equation}
which agrees with the large time limit of the solution for the Uhlenbeck-Ornstein model (Shivamoggi et al. \cite{BKS}).

\section{The Non-Markovian Model}
\subsection{The long-range autocorrelation case}\label{sec:long-range-nonMarkov}
\subsubsection{Perturbative procedure}\label{sec:black-nonMarkov-perturb}

For the non-Markovian case, on applying Keller's perturbative procedure \cite{Keller}, the IVP \eqref{eq:model-nonMarkov} becomes 

\begin{equation}\label{eq:non-markov-perturb}
    \frac{dG}{dt} + \nu\int_0^t G(t')dt' + \sigma^2\int_0^t \langle b(t)b(t') \rangle \cos\sqrt{\nu}(t-t') G(t')dt' = \delta(t), \, G(0)=1.
\end{equation}
Using the Tsallis type \cite{Tsallis-intro} autocorrelation model \eqref{eq:autocorr}, the IVP \eqref{eq:non-markov-perturb} becomes

\begin{equation}\label{eq:nm-p}
    \frac{dG}{dt} + \nu\int_0^t G(t')dt' + \sigma^2\int_0^t  [1-\lambda \ell(t-t')]^{\frac{1}{\ell}} \cos\sqrt{\nu}(t-t') G(t')dt' = \delta(t), \, G(0)=1.
\end{equation}
For small $\lambda$, using the Taylor series approximation \eqref{eq:smalllambdatsallis}, the IVP \eqref{eq:nm-p} becomes

\begin{equation}\label{eq:nm-p-sm}
    \frac{dG}{dt} + \nu\int_0^t G(t')dt' + \sigma^2\int_0^t  [1-\lambda (t-t')] \cos\sqrt{\nu}(t-t') G(t')dt' = \delta(t), \,G(0)=1.
\end{equation}
Applying the Laplace transform, the IVP \eqref{eq:nm-p-sm} leads to

\begin{equation}\label{eq:nm-p-sm-L}
    p\,\GG(p) + \nu\frac{\GG(p)}{p} + \sigma^2 \left[ \frac{p}{p^2 + \nu} + \lambda \frac{d}{dp}\bigg( \frac{p}{p^2 + \nu}\bigg) \right]\GG(p) = 1.
\end{equation}
Noting that we may write, for small $\lambda$,
\begin{equation} \label{eq:taylor}
    \frac{p}{p^2 + \nu} + \lambda \frac{d}{dp}\bigg( \frac{p}{p^2 + \nu}\bigg) \approx \frac{p+\lambda}{(p+\lambda)^2 + \nu},
\end{equation}
eq\eqref{eq:nm-p-sm-L} becomes

\begin{equation}
    \GG(p) \left[ p + \frac{\nu}{p} + \sigma^2 \frac{p+\lambda}{(p+\lambda)^2+\nu} \right] = 1,
\end{equation}
which agrees with the results given in \cite{BT} for the non-Markovian case.

\subsubsection{DIA procedure}\label{sec:black-nonMarkov-DIA}

Application of the DIA procedure (Kraichnan \cite{Kraichnan-1961}) entails again replacing the perturbative expression for the deterministic Green's function in the IVP \eqref{eq:non-markov-perturb} with the exact expression $G(t-t')$, which yields

\begin{equation}\label{eq:non-markov-DIA}
    \frac{dG}{dt} + \nu\int_0^t G(t')dt' + \sigma^2\int_0^t \langle b(t)b(t') G(t-t') \rangle G(t')dt' = \delta(t), \, G(0)=1.
\end{equation}
Using the quasi-normality hypothesis \eqref{eq:quasi-normal}, the Tsallis type \cite{Tsallis-intro} autocorrelation model \eqref{eq:autocorr}, and the small-$\lambda$ approximation \eqref{eq:smalllambdatsallis}, the IVP \eqref{eq:non-markov-DIA} leads to 

\begin{equation}\label{eq:nm-d-sm}
    \frac{dG}{dt} + \nu\int_0^t G(t')dt' + \sigma^2\int_0^t  [1-\lambda (t-t')] G(t-t') G(t')dt' = \delta(t), G(0)=1.
\end{equation}
Applying the Laplace transform, the IVP \eqref{eq:nm-d-sm} leads to

\begin{equation}\label{eq:nm-d-sm-L}
    p\,\GG(p) + \nu\frac{\GG(p)}{p} + \sigma^2 \left[ \GG(p) - \lambda\bigg(-\frac{d}{dp}\bigg)\GG(p) \right]\GG(p) = 1.
\end{equation}
Eq\eqref{eq:nm-d-sm-L} may be rearranged, for small $\lambda$, to yield the \emph{continued fraction} solution
\begin{equation}
    \GG(p) = \frac{1}{p + \frac{\nu}{p} + \sigma^2 \GG(p+\lambda)},
\end{equation}
which agrees again with the results given in \cite{BT} for the non-Markovian case.

\subsection{The short-range autocorrelation case}

In Section \ref{sec:long-range-nonMarkov}, we considered the case for which the inverse autocorrelation time $\lambda$ is small.  For large values of $\lambda$ (the white-noise limit), the Taylor series approximation \eqref{eq:smalllambdatsallis} is no longer valid.  Instead, we again use, for small $\ell$, the delta function approximation established in Theorem \ref{thm:large-tsallis} in \ref{appendix2}.

\subsubsection{Perturbative procedure}\label{sec:white-nonMarkov-perturb}
Applying Keller's perturbation procedure \cite{Keller} and using the Tsallis type \cite{Tsallis-intro} autocorrelation model \eqref{eq:autocorr}, the IVP \eqref{eq:model-nonMarkov} leads to

\begin{equation}\label{eq:white-nM-perturb-model}
    \frac{dG}{dt} + \nu\int_0^t G(t')dt' + \sigma^2\int_0^t  [1-\lambda \ell(t-t')]^{\frac{1}{\ell}} \cos\sqrt{\nu}(t-t') G(t')dt' = \delta(t),\, G(0)=1. \tag{\ref{eq:nm-p}}
\end{equation}
For large $\lambda$, using \eqref{eq:largelambdatsallis}, the IVP \eqref{eq:nm-p} becomes

\begin{equation}\label{eq:nm-p-lg}
    \frac{dG}{dt} + \nu\int_0^t G(t')dt' + \sigma^2\int_0^t  \frac{1}{\lambda}\delta(t-t') \cos\sqrt{\nu}(t-t') G(t')dt' = \delta(t), \, G(0)=1.
\end{equation}
Applying the Laplace transform, the IVP \eqref{eq:nm-p-lg} leads to
\begin{equation}\label{eq:nm-p-lg-L}
    p\,\GG(p) + \nu\frac{\GG(p)}{p} + \frac{\sigma^2}{\lambda}\GG(p) = 1
\end{equation}
which gives,
\begin{equation}\label{eq:nm-p-lg-L2}
    \GG(p) = \frac{p}{p^2 + \frac{\sigma^2}{\lambda}p + \nu}
\end{equation}
Upon inverting the Laplace transform, \eqref{eq:nm-p-lg-L2} leads to
\begin{equation}\label{eq:nM-p-solution}
    G(t) = e^{(-\sigma^2/2\lambda)t}\cos\sqrt{\nu-\frac{\sigma^4}{4\lambda^2}}\,t - e^{(-\sigma^2/2\lambda)t}\sin\sqrt{\nu-\frac{\sigma^4}{4\lambda^2}}\,t,
\end{equation}
showing again a fluctuation-induced exponential attenuation.

\subsubsection{DIA procedure}\label{sec:white-nonMarkov-DIA}

Applying the DIA procedure (Kraichnan \cite{Kraichnan-1961}) again as before, the IVP \eqref{eq:model-nonMarkov} leads, in place of \eqref{eq:white-nM-perturb-model}, to
\begin{equation}
    \frac{dG}{dt} + \nu\int_0^t G(t')dt' + \sigma^2\int_0^t  [1-\lambda \ell(t-t')]^{\frac{1}{\ell}} G(t-t') G(t')dt' = \delta(t), \, G(0)=1. \tag{\ref{eq:non-markov-DIA}}
\end{equation}
For large $\lambda$, using \eqref{eq:largelambdatsallis}, the IVP \eqref{eq:non-markov-DIA} becomes

\begin{equation}\label{eq:nm-d-lg}
    \frac{dG}{dt} + \nu\int_0^t G(t')dt' + \sigma^2\int_0^t  \frac{1}{\lambda}\delta(t-t') G(t-t') G(t')dt' = \delta(t), \, G(0)=1.
\end{equation}
Upon applying the Laplace transform, the IVP \eqref{eq:nm-d-lg} leads to
\begin{equation}\label{eq:nm-d-lg-L}
    p\,\GG(p) + \nu\frac{\GG(p)}{p} + \frac{\sigma^2}{\lambda}\GG(p) = 1.
\end{equation}
Eq\eqref{eq:nm-d-lg-L} gives
\begin{equation}
    \GG(p) = \frac{p}{p^2 + \frac{\sigma^2}{\lambda}p + \nu},
\end{equation}
which is the same as the perturbative result \eqref{eq:nM-p-solution}, indicating again that the \textit{white-noise} limit minimizes the non-perturbative aspects excluded by Keller's perturbative procedure \cite{Keller}, as in the Markovian case.

\subsection{The large-time limit}

For large $t$, the Green's function $\cos\sqrt{\nu}(t-t')$ in the non-Markovian IVP \eqref{eq:nm-p} oscillates rapidly, making a net-zero contribution. The IVP \eqref{eq:nm-p} then reduces to

\begin{equation}\label{eq:nm-lgtime}
    \frac{dG}{dt} + \nu\int_0^t G(t')dt' = \delta(t), \, G(0)=1.
\end{equation}
Applying the Laplace transform, \eqref{eq:nm-lgtime} yields

\begin{equation}
    p\,\GG(p) + \frac{\nu}{p}\GG(p) = 1,
\end{equation}
from which,

\begin{equation}\label{eq:nm-largetime-laplace}
    \GG(p) = \frac{p}{p^2 + \nu}.
\end{equation}
Inverting the Laplace transform, \eqref{eq:nm-largetime-laplace} leads to

\begin{equation}
    G(t) = \cos(\sqrt{\nu}t),
\end{equation}
which shows an indefinite oscillation in the limit $t \to \infty$, befitting a random process with infinite memory.

\section{Discussion}

The direct interaction approximation (DIA) developed by Kraichnan \cite{Kraichnan-1961}, \cite{Kraichnan-1965} continues to provide the only fully self-consistent analytical theory of turbulence in fluids.  The requirement of weak nonlinear effects for formal application of the DIA to a statistical problem necessitates a rationalization of the DIA in dealing with a counterexample like the turbulence problem.  Kraichnan \cite{Kraichnan-1965} sought to accomplish this by applying the DIA to a model equation which can be solved exactly, and several mathematical issues associated with this problem were explored by Shivamoggi et al. \cite{BKS} and Shivamoggi and Tuovila \cite{BT}.  These developments (\cite{Kraichnan-1961,Kraichnan-1965, BKS, BT}) were based on the Boltzmann-Gibbs prescription for the underlying entropy measure, which exhibits the extensivity property.  In this paper, we have considered the application of the DIA to non-ergodic stochastic systems using a Tsallis type \cite{Tsallis-intro} autocorrelation model.  As an example, we have considered a linear damped stochastic oscillator system, and have analyzed the Markovian and non-Markovian cases separately.  The DIA solutions have been compared with those given by Keller's perturbative procedure \cite{Keller}.  The non-perturbative aspects excluded by the latter procedure are found to be minimized in the white-noise limit.  On the other hand, in the opposite limit, the physical dissimilarities between Tsallis type and Uhlenbeck-Ornstein models don't seem to materialize, and the two models are found to yield the same result.  In the process of these developments, we have also deduced some apparently novel mathematical properties of the stochastic models involved - the gamma distribution (Andrews et al. \cite{APS1989}, Andrews and Shivamoggi \cite{AS1990}) and the Tsallis non-extensive entropy (Tsallis \cite{Tsallis-intro}).

\section*{Acknowledgements}
We acknowledge the benefit of helpful discussions with Professor Larry Andrews, and are thankful to Professor Katepalli Sreenivasan for his helpful remarks.

\newcounter{oldsection}
\setcounter{oldsection}{\thesection}

\appendix
\renewcommand{\thesection}{Appendix \Alph{section}}

\setcounter{equation}{0}
\renewcommand{\theequation}{A.\arabic{equation}}
\section[Appendix]{Mathematical motivation for Tsallis entropy}\label{appendix1}

One may motivate the generalization of the Boltzmann-Gibbs entropy $S_{BG}$ to the Tsallis entropy by introducing (Tsallis \cite{TsallisReview}) the following initial-value problem (IVP),
\begin{equation}\label{eq:odey}
    \frac{dy}{dx} = y,\enspace y(0) = 1,
\end{equation}
which has the solution is $y=e^x$. Its inverse function is $y = \ln(x),$ which has the same functional form as the Boltzmann-Gibbs entropy \eqref{eq:BG-simple}, and satisfies the additive property
\begin{equation}
    \ln(x_Ax_B) = \ln(x_A) + \ln(x_B).
\end{equation}

Consider next a one-parameter IVP,
\begin{equation}\label{eq:odeyq}
    \frac{dy}{dx} = y^q,\enspace y(0) = 1,
\end{equation}
which for $q=1$, leads to the IVP \eqref{eq:odey}. This generalization has the advantage of having only one parameter, but at the expense of the loss of linearity. The solution of the IVP \eqref{eq:odeyq} is the \emph{q-exponential} function
\begin{equation}\label{eq:qexp}
    y = [1+(1-q)x]^{\frac{1}{1-q}} \equiv e_q^x \qquad (e_1^x = e^x),
\end{equation}
whose inverse is the \emph{q-logarithmic} function
\begin{equation}\label{eq:lnq}
    y = \frac{x^{1-q} - 1}{1-q} \equiv \ln_q(x) \qquad (\ln_1(x) = \ln(x)).
\end{equation}
This function satisfies the pseudo-additive property
\begin{equation}\label{eq:psudeo-additive}
    \ln_q(y_Ay_B) = \ln_q(y_A) + \ln_q(y_B) + (1-q)\ln_q(y_A)\ln_q(y_B).
\end{equation}

The Boltzmann-Gibbs entropy \eqref{eq:BG} may be rewritten as
\begin{equation}
    S_{BG} = \left\langle \ln\left(\frac{1}{p_i}\right) \right\rangle.
\end{equation}
The quantity $\ln(\frac{1}{p_i})$ is called \textit{``surprise''}. The Tsallis entropy may therefore be defined by introducing the \textit{``q-surprise''} $\ln_q(1/p_i)$, and using \eqref{eq:lnq}, we obtain
\begin{equation}\label{eq:tsallis}
    S_q \equiv \left\langle\ln_q\left(\frac{1}{p_i}\right) \right\rangle = \sum_{i=1}^W p_i\ln_q\left(\frac{1}{p_i}\right) = \frac{1- \sum_{i=1}^Wp_i^q}{q-1}.
\end{equation}

\setcounter{equation}{0}
\renewcommand{\theequation}{B.\arabic{equation}}

\section{Zero dispersion limit of the Gamma distribution}\label{appendix3}

\begin{lemma}
    In the \emph{zero dispersion} limit $c \to \infty$, the gamma distribution reduces to $\delta(\lambda  - \lambda_0)$,
    \begin{equation}
        \lim_{c\to\infty} f(\lambda;c) \equiv \lim_{c\to\infty} \frac{1}{a\Gamma(c)}\biggl(\frac{\lambda }{a}\biggr)^{c-1}e^{-\lambda /a} = \delta(\lambda  - \lambda_0)\tag{\ref{eq:lambda^*-density}}, \, \lambda_0=ac.
    \end{equation}
\end{lemma}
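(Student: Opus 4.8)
The plan is to establish \eqref{eq:gamma-delta} in the distributional sense, i.e.\ to show that for every bounded continuous test function $\phi$ on $(0,\infty)$ one has $\int_0^\infty \phi(\lambda) f(\lambda;c)\,d\lambda \to \phi(\lambda_0)$ as $c\to\infty$, where throughout the mean $\lambda_0 = ac$ is held fixed, so that $a = \lambda_0/c \to 0$. First I would record that for each finite $c$ the function $f(\cdot;c)$ is a genuine probability density on $(0,\infty)$: nonnegativity is immediate, and the substitution $u = \lambda/a$ together with $\Gamma(c) = \int_0^\infty u^{c-1}e^{-u}\,du$ gives $\int_0^\infty f(\lambda;c)\,d\lambda = 1$. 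The same substitution yields the first two moments already quoted in \eqref{eq:mean-variance}, namely $\langle\lambda\rangle = ac = \lambda_0$ and $\langle\lambda^2\rangle - \langle\lambda\rangle^2 = a^2c = \lambda_0^2/c$.

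The core of the argument is then a concentration estimate. Since the variance is $\lambda_0^2/c \to 0$ while the mean stays pinned at $\lambda_0$, Chebyshev's inequality gives, for every $\varepsilon > 0$,
\begin{equation}
    \int_{|\lambda - \lambda_0| > \varepsilon} f(\lambda;c)\,d\lambda \;\le\; \frac{\lambda_0^2}{c\,\varepsilon^2} \;\xrightarrow[c\to\infty]{}\; 0 .
\end{equation}
Splitting $\int_0^\infty \phi(\lambda) f(\lambda;c)\,d\lambda - \phi(\lambda_0)$ into the contribution from $|\lambda - \lambda_0| \le \varepsilon$ (controlled by the continuity of $\phi$ at $\lambda_0$ together with $\int f = 1$) and the contribution from $|\lambda-\lambda_0| > \varepsilon$ (controlled by $2\|\phi\|_\infty$ times the tail mass above), then sending $c\to\infty$ and afterwards $\varepsilon\to 0$, one obtains the weak convergence $f(\cdot;c) \rightharpoonup \delta(\cdot - \lambda_0)$, which is the precise meaning of \eqref{eq:gamma-delta}.

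I would also indicate two standard alternatives that corroborate this. (i) The Laplace transform of $f(\cdot;c)$ is $(1+as)^{-c} = (1 + \lambda_0 s/c)^{-c} \to e^{-\lambda_0 s}$, the Laplace transform of $\delta(\lambda - \lambda_0)$, so the claim also follows from the continuity theorem for Laplace transforms. (ii) A sharper, fully pointwise picture comes from Stirling's formula $\Gamma(c)\sim\sqrt{2\pi/c}\,(c/e)^c$: writing $x = \lambda/\lambda_0$ one finds $f(\lambda;c) \sim \frac{1}{\lambda}\sqrt{c/(2\pi)}\,\exp\!\big(c[\ln x + 1 - x]\big)$, and since $g(x) = \ln x + 1 - x$ is strictly concave with unique maximum $g(1) = 0$, the exponential factor decays geometrically for every $\lambda\neq\lambda_0$ and overwhelms the $\sqrt{c}$ prefactor, while a Laplace-method expansion about $x = 1$ exhibits $f(\cdot;c)$ as asymptotically the Gaussian density with mean $\lambda_0$ and variance $\lambda_0^2/c$.

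The main obstacle is conceptual rather than computational: one must be clear that \eqref{eq:gamma-delta} cannot hold pointwise (indeed $f(\lambda;c) \to 0$ for each fixed $\lambda\neq\lambda_0$ and $f(\lambda_0;c)\to\infty$), so the identity has to be read as convergence of measures / in the sense of distributions, and the proof must be organized around that notion -- carrying the normalization $\int f = 1$ through every step and using it to pin the limiting functional to evaluation at $\lambda_0$. Everything else (the moment computations, Chebyshev, the Stirling estimate) is routine.
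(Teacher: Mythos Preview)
Your proposal is correct, and indeed somewhat cleaner than the paper's argument. The paper proceeds essentially along the lines of your alternative~(ii): it substitutes $a=\lambda_0/c$, invokes Stirling's approximation $\Gamma(c)\sim\sqrt{2\pi}\,c^c e^{-c}$ to show that $f(\lambda;c)\to 0$ pointwise for every $\lambda\neq\lambda_0$ (via the strict negativity of $1-\lambda/\lambda_0+\ln(\lambda/\lambda_0)$), records the normalization $\int_0^\infty f(\lambda;c)\,d\lambda=1$, and then carries out the same three-part splitting of $\int g(\lambda)f(\lambda;c)\,d\lambda$ around $\lambda_0\pm\varepsilon$ to obtain the sifting property. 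Your primary route via Chebyshev's inequality is a genuinely different and more elementary mechanism: it replaces the pointwise Stirling estimate by the single variance computation $\sigma^2=\lambda_0^2/c$, which immediately controls the \emph{tail mass} rather than the pointwise values. This buys you a fully rigorous interchange of limit and integral for free, whereas the paper's version tacitly passes $\lim_{c\to\infty}$ inside the integral when discarding the outer pieces, a step that strictly speaking requires a uniform or dominated-convergence justification it does not supply. Your remark that the statement must be read distributionally, not pointwise, is also a point the paper leaves implicit.
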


\begin{proof}
Using \eqref{eq:bc}, \eqref{eq:lambda^*-density} becomes

\begin{equation}\label{eq:all-c}
    \lim_{c\to\infty}f(\lambda; c) = \lim_{c\to\infty} \frac{c}{\lambda_0\Gamma(c)}\biggl(\frac{\lambda c}{\lambda_0}\biggr)^{c-1}e^{-\lambda c/\lambda_0}.
\end{equation}
Using the following asymptotic result for the gamma function (Andrews \cite{Andrews}),

\begin{equation}\label{eq:Gamma-bounds}
    \Gamma(x) \sim \sqrt{2\pi}x^xe^{-x},
\end{equation}
we have, for $\lambda  \neq \lambda_0$,

\begin{equation}
    \begin{split}\label{eq:lim-is-zero}
        \lim_{c\to\infty} \frac{c^c}{\lambda_0\Gamma(c)}\biggl(\frac{\lambda }{\lambda_0}\biggr)^{c-1}e^{-\lambda c/\lambda_0} &\sim \lim_{c\to\infty} \frac{c^c}{\lambda_0\sqrt{2\pi}c^{c}e^{-c}}\biggl(\frac{\lambda }{\lambda_0}\biggr)^{c}e^{-\lambda c/\lambda_0} \\ 
        &\sim \lim_{c\to\infty} \frac{1}{\lambda_0\sqrt{2\pi}}\exp{\biggl[1-\frac{\lambda }{\lambda_0}+\ln(\frac{\lambda }{\lambda_0})\biggr]c} \\
        &\Rightarrow 0,
        \end{split}
\end{equation}
on noting $[1- (\lambda/\lambda_0)+\ln(\lambda /\lambda_0)] < 0$ (see Figure \ref{fig:neg}).

\begin{figure}[p]
    \vspace{1in}
    \centering
    \includegraphics[width=\maxwidth{0.9\textwidth}]{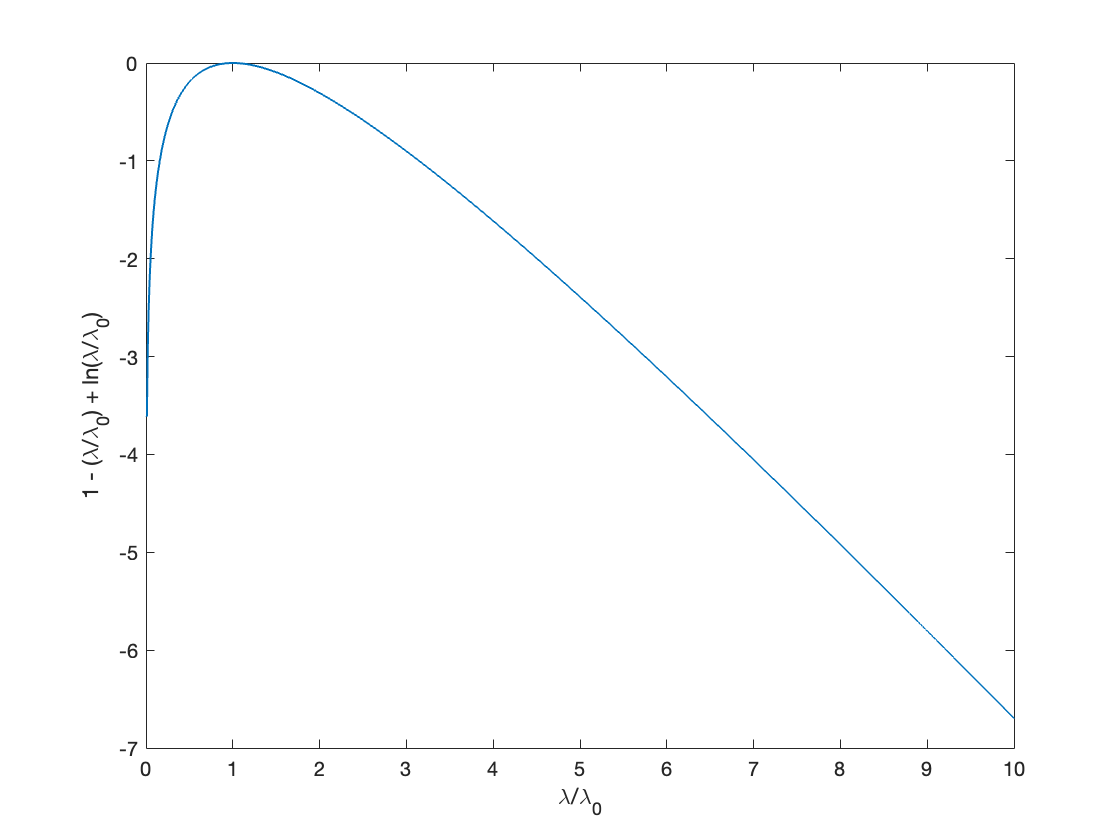}
    \caption{Plot showing $1- (\lambda/\lambda_0)+\ln(\lambda /\lambda_0) < 0$.}\label{fig:neg}
    \vspace{1in}
\end{figure}
Furthermore,

\begin{equation}\label{eq:norm-one}
    \int_0^{\infty} f(\lambda; c)d\lambda = \frac{c^c}{\lambda_0^c\Gamma(c)}\int_0^{\infty}\lambda^{c-1}e^{-\lambda c/\lambda_0} = \frac{c^c}{\lambda_0^c\Gamma(c)} \frac{\Gamma(c)}{(c/\lambda_0)^c} = 1, \,\, \forall c.
\end{equation}

Finally, if $g(\lambda)$ is a smoothly varying function of $\lambda$, we have
\begin{align}
    \int_0^{\infty} g(\lambda)\lim_{c\to\infty}f(\lambda;c)d\lambda &= \int_0^{\lambda_0-\varepsilon} g(\lambda)\lim_{c\to\infty}f(\lambda;c)d\lambda &\hspace{-1em}+ \int_{\lambda_0-\varepsilon}^{\lambda_0 \notag+\varepsilon} g(\lambda)\lim_{c\to\infty}f(\lambda;c)d\lambda \\ &&+ \int_{\lambda_0+\varepsilon}^{\infty} g(\lambda)\lim_{c\to\infty}f(\lambda;c)d\lambda \\
    &\approx g(\lambda_0)\int_{\lambda_0-\varepsilon}^{\lambda_0+\varepsilon}\lim_{c\to\infty}f(\lambda;c)d\lambda \label{eq:ints-are-zero} \\
    &\approx g(\lambda_0) \int_0^{\infty} f(\lambda;c)d\lambda \\
    &\approx g(\lambda_0) \label{eq:sift},
\end{align}
where \eqref{eq:ints-are-zero} follows from \eqref{eq:lim-is-zero}, and \eqref{eq:sift} follows from \eqref{eq:norm-one}. 

Therefore, \[\lim_{c\to\infty}f(\lambda; c) = \delta(\lambda - \lambda_0).\]  This result is also numerically verified (see Figure \ref{fig:gammadelta}).
\end{proof}

\begin{figure}[p]
    \centering
    \includegraphics[width=\maxwidth{0.9\textwidth}]{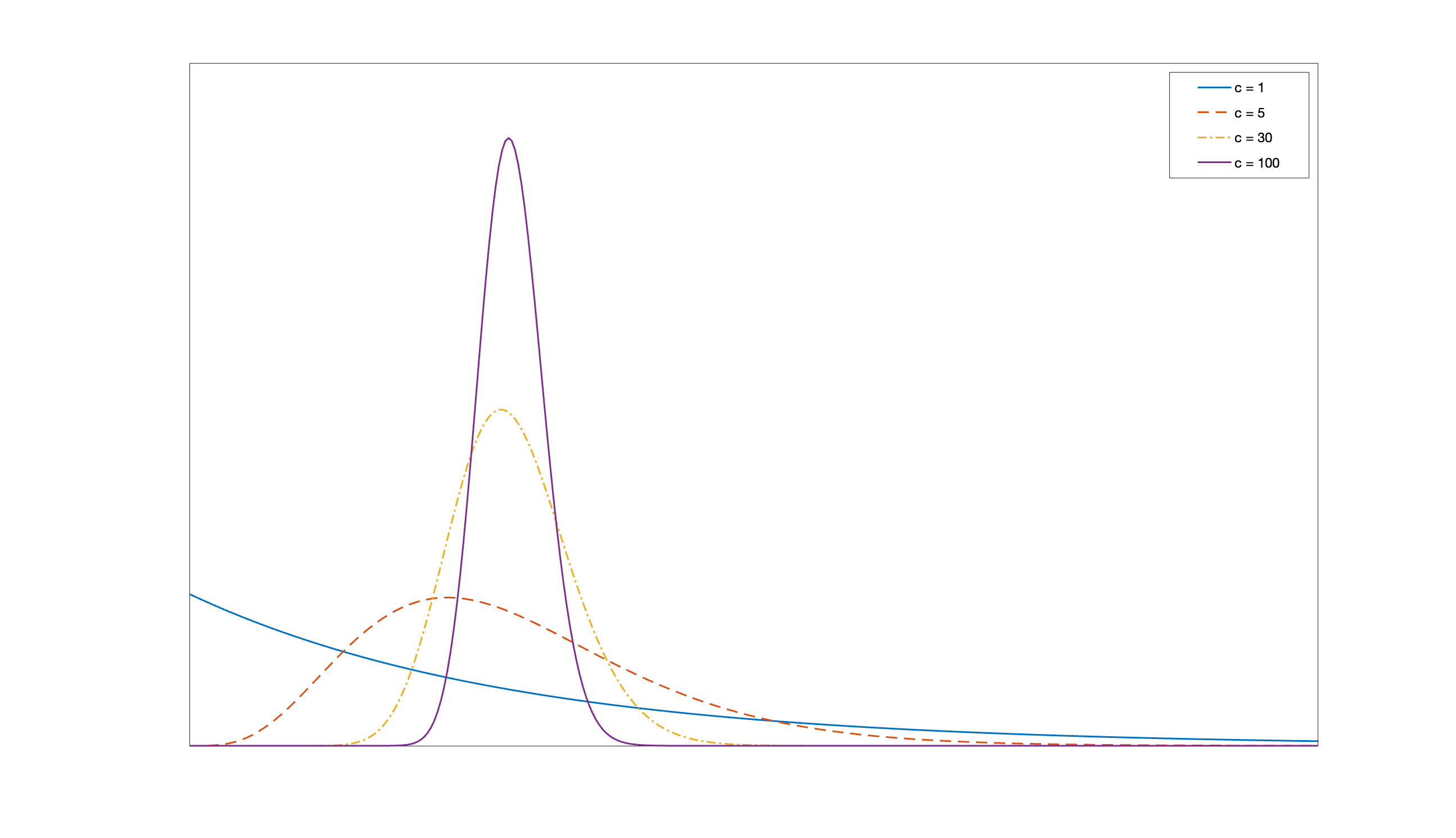}
    \caption{The Gamma distribution as a function of the dispersion parameter $c$.}\label{fig:gammadelta}
    \vspace{1in}
\end{figure}

\setcounter{equation}{0}
\renewcommand{\theequation}{C.\arabic{equation}}

\section{Mathematical properties of Tsallis type autocorrelation function}\label{appendix2}

The Tsallis type generalization of the autocorrelation may be motivated via the solution of the IVP associated with the autocorrelation function, 
\begin{equation}\label{eq:ode} 
    \begin{rcases}
        \dfrac{d\rho}{dt} = -\lambda\rho^q, \, q \neq 1 \\ 
        \rho(0)=1
    \end{rcases}.
\end{equation}
The Uhlenbeck-Ornstein model, used in Shivamoggi et al. \cite{BKS} and Shivamoggi and Tuovila \cite{BT}, is a special case of \eqref{eq:ode}, corresponding to $q=1$.

The solution to the IVP \eqref{eq:ode} is given by
\begin{equation}\label{eq:q-exp}
    y = [1 - \lambda(1-q)t]^{\frac{1}{1-q}} \equiv e_q^{-\lambda t}, \qquad (e_1^{-\lambda t} = e^{-\lambda t})
\end{equation}
where $\lambda$ may be interpreted as the inverse autocorrelation time.
For notational simplicity, we may write
\begin{equation}\label{eq:ql-exp}
    e_q^{-\lambda t} \equiv [1 - \lambda \ell t]^{\frac{1}{\ell}}, \qquad \ell \equiv 1-q.
\end{equation}
Here we discuss the asymptotic properties of \eqref{eq:q-exp}.

\begin{lemma}\label{lemma:limits} For all $t \geq 0$, the iterated limits
    \begin{equation}
        \lim_{\lambda \to \infty} \lim_{\ell \to 0} (1-\lambda \ell t)^{\frac{1}{\ell}} = \lim_{\ell \to 0} \lim_{\lambda \to \infty} (1-\lambda \ell t)^{\frac{1}{\ell}} = 0 
    \end{equation}
    exist and commute, but the simultaneous limit
    \begin{equation*}
        \lim_{\substack{\lambda \to \infty \\ \ell\to 0}} (1-\lambda \ell t)^{\frac{1}{\ell}}
    \end{equation*}
    does not exist.
\end{lemma}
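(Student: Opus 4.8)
The plan is to analyze the expression $(1-\laml t)^{1/\ell}$ along two routes and then exhibit a path along which the limit fails, so I would split the argument into three parts corresponding to the three claims.

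\textbf{First iterated limit.} Fix $t>0$ and $\lambda$. I would compute $\lim_{\ell\to 0}(1-\laml t)^{1/\ell}$ by the standard device of writing the expression as $\exp\!\bigl[\tfrac{1}{\ell}\ln(1-\laml t)\bigr]$ and using $\ln(1-\laml t) = -\laml t + O(\ell^2)$ as $\ell\to 0$, so the exponent tends to $-\lambda t$ and the limit is $e^{-\lambda t}$ (the $q$-exponential reducing to the ordinary exponential, consistent with \eqref{eq:ql-exp}). Then taking $\lambda\to\infty$ sends $e^{-\lambda t}\to 0$ for every $t>0$; at $t=0$ the expression is identically $1$, but the statement is for $t\ge 0$ and the case $t=0$ is trivial, so I would note it separately or restrict attention to $t>0$.

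\textbf{Second iterated limit.} Fix $t>0$ and $\ell>0$ small (so that $q<1$). For $\lambda$ large enough that $1-\laml t<0$, the base is negative while the exponent $1/\ell$ need not be an integer, so strictly one should appeal to the compatibility convention already used in the paper (cf.\ the condition $1-(1-q)\beta E_i<0:\overline{p_i}=0$ after \eqref{eq:max-Sq}, and the delta-function/white-noise reading of $e_q^{-\laml t}$): once $\lambda > 1/(\ell t)$ the $q$-exponential is taken to be $0$, hence $\lim_{\lambda\to\infty}(1-\laml t)^{1/\ell}=0$, and then $\lim_{\ell\to 0}0 = 0$. I would state this convention explicitly so the reader sees why the inner limit is $0$ rather than an ill-defined complex quantity.

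\textbf{Nonexistence of the simultaneous limit.} Here the idea is to produce two approaches $(\lambda,\ell)\to(\infty,0)$ giving different values. Along $\ell\to 0$ first with $\lambda$ fixed large but finite one is near $e^{-\lambda t}$, which can be made close to $0$; but along the curve $\lambda\ell = c/t$ for a fixed constant $c\in(0,1)$ — so that $\lambda = c/(\ell t)\to\infty$ as $\ell\to 0$ — the expression equals $(1-c)^{1/\ell}\to 0$ as well, which is not yet a contradiction, so I would instead take $c<0$, i.e.\ the curve $\lambda\ell = -c_0/t$ with $c_0>0$: then $\lambda=-c_0/(\ell t)$, which requires $\ell<0$ (that is, $q>1$, the regime actually relevant to the autocorrelation model \eqref{eq:B-tsallis} where $q>1$), and the expression becomes $(1+c_0)^{1/\ell}$; letting $\ell\to 0^-$ this blows up to $+\infty$. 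Thus along one sequence the expression tends to $0$ and along another it tends to $+\infty$ (and intermediate choices of the product $\laml$ give any value in between), so no simultaneous limit exists. The main obstacle is bookkeeping the sign of $\ell$: the two iterated limits are most naturally read in the $\ell>0$ ($q<1$) regime with the vanishing convention, whereas the cleanest demonstration of nonexistence exploits the $\ell<0$ ($q>1$) side where the base stays positive and the power genuinely diverges; I would be careful to state that $\ell$ ranges over a punctured neighborhood of $0$ (both signs) so that the simultaneous limit is over that full set, making the divergent path legitimate.
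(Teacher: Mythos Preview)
Your treatment of the first iterated limit matches the paper's. For the second iterated limit you fix $\ell>0$ and invoke the cut-off convention to obtain $0$; the paper does this too, but it also handles $\ell<0$ separately, writing $(1-\lambda\ell t)^{1/\ell}=(1+\lambda|\ell|t)^{-1/|\ell|}\to 0$ as $\lambda\to\infty$. Since you yourself later allow $\ell$ of either sign, you should include that case as well.

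The real problem is your nonexistence argument. Along the path $\lambda\ell=-c_0/t$ with $c_0>0$ and $\ell\to 0^{-}$ you arrive at $(1+c_0)^{1/\ell}$ and claim it blows up. But $\ell\to 0^{-}$ means $1/\ell\to -\infty$, and since $1+c_0>1$ this gives $(1+c_0)^{1/\ell}\to 0$, not $+\infty$. Thus your ``divergent'' path actually agrees with the others and does not witness nonexistence. The paper's strategy is different: it exhibits two explicit polynomial scalings, $\ell=N^{-2},\ \lambda=N$ versus $\ell=N^{-3},\ \lambda=N$, and argues that they yield distinct limits. If you want to salvage a curve-based argument you need a path along which the expression converges to something other than $0$, which requires simultaneous control of the base $1-\lambda\ell t$ (keeping it close to $1$) and the exponent $1/\ell$; fixing the base at a constant $1+c_0>1$ while letting the exponent run to $-\infty$ is exactly the wrong combination.
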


\begin{proof}
    We begin by evaluating the limit, $\ell \to 0$, first. We note that 
    \[{\lim_{\ell \to 0}} (1-\lambda \ell t)^{\frac{1}{\ell}} = e^{-\lambda t}.\] 
    Therefore, for all $t \neq 0$,
    \begin{equation*}
        \lim_{\lambda \to \infty} \lim_{\ell \to 0} (1-\lambda \ell t)^{\frac{1}{\ell}} = \lim_{\lambda \to \infty} e^{-\lambda t} = 0.
    \end{equation*}
    
    On the other hand, when evaluating the limit $\lambda \to \infty$ first, we consider separately the left and right limits $\ell \to 0^-$ and $\ell \to 0^+$. When approaching $\ell \to 0^+$, for sufficiently large $\lambda$, we have $(1 - \lambda\ell t) < 0.$  Noting, $e_q^{-\lambda t} = \max\{0,(1-\lambda \ell t)^{\frac{1}{\ell}}\}$, we have
    \begin{equation*}
        \lim_{\ell \to 0^+} \lim_{\lambda \to \infty} (1-\lambda \ell t)^{\frac{1}{\ell}} = 0.
    \end{equation*}
    When $\ell \to 0^-$, we have $\frac{1}{\ell} < 0$ and hence
    \begin{equation*}
        \lim_{\ell \to 0^-} \lim_{\lambda \to \infty} \bigg(\frac{1}{1+\lambda \abs{\ell} t}\bigg)^{\abs{\frac{1}{\ell}}} = 0.
    \end{equation*}
    Therefore,

    \begin{equation*}
        \lim_{\ell \to 0} \lim_{\lambda \to \infty} (1-\lambda \ell t)^{\frac{1}{\ell}} = 0
    \end{equation*}
    for all $t \neq 0$.

    For the simultaneous limit, let $N$ be a natural number. First, consider $\ell = \frac{1}{N^2}$ and $\lambda = N$. Then
    \begin{equation*}
        \lim_{\substack{\lambda \to \infty \\ \ell\to 0}} (1-\lambda \ell t)^{\frac{1}{\ell}} = \lim_{N \to \infty} \bigg(1 - \frac{t}{N}\bigg)^N = e^t.
    \end{equation*}
    On the other hand, if we consider $\ell = \frac{1}{N^3}$ and $\lambda = N$, we obtain
    \begin{equation*}
        \lim_{\substack{\lambda \to \infty \\ \ell\to 0}} (1-\lambda \ell t)^{\frac{1}{\ell}} = \lim_{N \to \infty} \bigg(1 - \frac{t}{N^2}\bigg)^N = 1.
    \end{equation*}
    Therefore, the simultaneous limit $\displaystyle \lim_{\substack{\lambda \to \infty \\ \ell\to 0}} (1-\lambda \ell t)^{\frac{1}{\ell}}$ does not exist, even though both iterated limits exist and are equal.

\end{proof}

\begin{theorem}\label{thm:large-tsallis}
    For small values of $\lambda$,
    \begin{equation}\label{eq:smalllambdatsallis}
        [1-\lambda \ell t]^{\frac{1}{\ell}} \sim 1 - \lambda t.
    \end{equation}
    For very large values of $\lambda$,
        \begin{equation}\label{eq:largelambdatsallis}
            \lim_{\lambda \to \infty} \lim_{\ell\to 0} (1-\lambda \ell t)^{\frac{1}{\ell}} \sim \frac{1}{\lambda}\delta(t).
        \end{equation}
    \end{theorem}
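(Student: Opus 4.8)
The plan is to handle the two regimes separately; in both cases the work is an elementary expansion, the only care needed being to respect the order of limits fixed by Lemma~\ref{lemma:limits}. For the small-$\lambda$ estimate \eqref{eq:smalllambdatsallis}, I would expand the $q$-exponential by the generalized binomial theorem at fixed $\ell$ and $t$,
\begin{equation*}
    (1-\lambda\ell t)^{\frac{1}{\ell}} = \sum_{k\ge 0}\binom{1/\ell}{k}(-\lambda\ell t)^k = 1 - \lambda t + \frac{1-\ell}{2}\,\lambda^2 t^2 - \cdots ,
\end{equation*}
so that every term past the first two carries at least a factor $\lambda^2$; equivalently $\tfrac1\ell\ln(1-\lambda\ell t)=-\lambda t-\tfrac12\lambda^2\ell t^2-\cdots$, whence $(1-\lambda\ell t)^{1/\ell}=e^{-\lambda t}\bigl(1+O(\lambda^2)\bigr)=1-\lambda t+O(\lambda^2)$. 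This gives \eqref{eq:smalllambdatsallis} read as a first-order expansion in $\lambda$, and is the routine half.

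For \eqref{eq:largelambdatsallis} I would first dispose of the inner limit: by Lemma~\ref{lemma:limits}, for each fixed $\lambda>0$ and $t\ge 0$ one has $\lim_{\ell\to 0}(1-\lambda\ell t)^{1/\ell}=e^{-\lambda t}$ — for $\ell$ small enough the base $1-\lambda\ell t$ is positive, so the truncation $\max\{0,\cdot\}$ never intervenes and no subtlety arises. It then remains to show that, for large $\lambda$, the function $e^{-\lambda t}$ behaves like $\tfrac1\lambda\delta(t)$ in the distributional sense on $[0,\infty)$; concretely, that $\lambda\int_0^\infty g(t)e^{-\lambda t}\,dt\to g(0)$ for every bounded continuous test function $g$, together with the mass identity $\int_0^\infty e^{-\lambda t}\,dt=\tfrac1\lambda$ and the pointwise decay $\lambda e^{-\lambda t}\to 0$ for $t$ bounded away from $0$.

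The cleanest route to the weak convergence is the rescaling $s=\lambda t$:
\begin{equation*}
    \lambda\int_0^\infty g(t)\,e^{-\lambda t}\,dt = \int_0^\infty g\!\left(\tfrac{s}{\lambda}\right)e^{-s}\,ds \;\longrightarrow\; g(0)\int_0^\infty e^{-s}\,ds = g(0) \qquad (\lambda\to\infty),
\end{equation*}
the passage to the limit being justified by dominated convergence, since $\lvert g(s/\lambda)e^{-s}\rvert\le \lVert g\rVert_\infty e^{-s}\in L^1(0,\infty)$. Alternatively one can mirror the $\varepsilon$-splitting argument used in the preceding appendix (\ref{appendix3}): write $\int_0^\infty=\int_0^\varepsilon+\int_\varepsilon^\infty$, observe $\lambda\int_\varepsilon^\infty e^{-\lambda t}\,dt=e^{-\lambda\varepsilon}\to 0$, and on $[0,\varepsilon]$ replace $g(t)$ by $g(0)$ up to $o(1)$ while $\lambda\int_0^\varepsilon e^{-\lambda t}\,dt=1-e^{-\lambda\varepsilon}\to 1$. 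Either way $\lambda e^{-\lambda t}\to\delta(t)$ weakly, i.e.\ $e^{-\lambda t}\sim\tfrac1\lambda\delta(t)$, which combined with the inner-limit evaluation yields \eqref{eq:largelambdatsallis}.

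The main obstacle here is not analytic difficulty but precision of statement. The symbol ``$\sim$'' in \eqref{eq:largelambdatsallis} must be interpreted as a distributional asymptotic, and the limit must be the \emph{iterated} one, $\ell\to 0$ first and then $\lambda\to\infty$ — Lemma~\ref{lemma:limits} shows the simultaneous limit does not exist, so the ordering is essential and should be stated as a hypothesis. A secondary point worth making explicit is the endpoint issue in the applications: when the nascent delta $\lambda e^{-\lambda(t-t')}$ is inserted into a convolution $\int_0^t(\cdot)\,dt'$, its spike sits at the right endpoint $t'=t$, so the relevant sifting relation is the one-sided $\int_0^t\delta(t-t')f(t')\,dt'=f(t)$ implicitly used in passing from \eqref{eq:m-d-lg} to \eqref{eq:m-d-simplified}; making this convention explicit avoids an apparent factor-of-two ambiguity.
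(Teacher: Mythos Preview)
Your proof is correct and follows the same overall route as the paper: Taylor expansion for the small-$\lambda$ part, and for the large-$\lambda$ part first taking the inner limit $\ell\to 0$ via Lemma~\ref{lemma:limits} to reduce to $e^{-\lambda t}$, then arguing that $\lambda e^{-\lambda t}$ is a nascent delta. The paper's own proof is considerably terser than yours: for \eqref{eq:smalllambdatsallis} it simply invokes a first-order Taylor approximation without writing out the binomial series, and for \eqref{eq:largelambdatsallis} it only checks the two defining properties---pointwise vanishing for $t\neq 0$ (from Lemma~\ref{lemma:limits}) and unit total mass $\int_0^\infty \lambda e^{-\lambda t}\,dt=1$---without testing against a general $g$. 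Your rescaling argument (or the $\varepsilon$-splitting alternative) actually establishes the sifting property, which is the stronger and more honest statement; your remarks on the distributional reading of ``$\sim$'', the necessity of the iterated-limit ordering, and the one-sided endpoint convention are all points the paper leaves implicit.
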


\begin{proof}
    For small values of $\lambda$, on using a first order Taylor series approximation \cite{TsallisReview}, we have

    \begin{equation}
        [1-\lambda \ell t]^{\frac{1}{\ell}} \sim 1 - \lambda t. \tag{\ref{eq:smalllambdatsallis}}
    \end{equation}
    For large values of $\lambda$, we have, from Lemma \ref{lemma:limits} that, for all $t \neq 0$, 
    \[\lim_{\lambda \to \infty} \lim_{\ell\to 0} (1-\lambda \ell t)^{\frac{1}{\ell}} = 0.\]
    Furthermore, we have
    \begin{subequations}
        \begin{align}
            \int_0^\infty \lim_{\lambda \to \infty} \lim_{\ell \to 0} \lambda(1-\lambda \ell t)^{\frac{1}{\ell}} dt &= \int_0^\infty \lim_{\lambda \to \infty} \lambda e^{-\lambda t} dt \\
            & = \lim_{\lambda \to \infty} \int_0^\infty \lambda e^{-\lambda t} dt \label{eq:DCT} \\
            & = \lim_{\lambda \to \infty} 1 = 1.
        \end{align}
    \end{subequations}

\end{proof}

\setcounter{equation}{0}
\renewcommand{\theequation}{D.\arabic{equation}}

\section{Double integral of Tsallis type autocorrelation}\label{appendix4}
Consider the double integral of the autocorrelation,
\begin{equation}\label{eq:auto}
    I_q(t) \equiv \frac{1}{2}\int_0^t\int_0^t \langle b(t')b(t'') \rangle dt' dt''.
\end{equation}
For a stationary process, the autocorrelation becomes
\begin{subequations}
    \begin{equation}
        \langle b(t')b(t'') \rangle \equiv g(t'-t'')
    \end{equation}
and \eqref{eq:auto} may then be transformed via a change of variables into the single integral

\begin{equation}\label{eq:autocorr-int}
   I_q(t) = \int_0^t (t-\tau)g(\tau) d\tau.
\end{equation}
\end{subequations}
Using the Tsallis type \cite{Tsallis-intro} prescription, we have

\begin{equation}
    g(\tau) = e_q^{-\lambda \tau} = [1 - \lambda\ell\tau]^{\frac{1}{\ell}}.
\end{equation} 
\eqref{eq:auto} then becomes

\begin{equation}
    I_q(t) = \int_0^t (t-\tau)[1 - \ell \lambda\tau]^{\frac{1}{\ell}} d\tau,
\end{equation}
which can be evaluated exactly,

\begin{equation}
    I_q(t) = \frac{t}{\lambda(1+\ell)} + \frac{1}{\lambda^2(1+\ell)(1+2\ell)}[1-\lambda \ell t]^{\frac{1}{\ell}+2} - \frac{1}{\lambda^2(1+\ell)(1+2\ell)}
\end{equation}
for $\ell \neq 0$ $(q \neq 1)$.

Numerical evaluations demonstrate several properties of $I_q(t)$:
\begin{property}
    For $q < 1$, $I_q(t)$ increases as $q$ increases (see Figure \ref{fig:vary-q}).
\end{property}
\begin{property}
    For $q > 1$, $I_q(t)$ decreases as $\lambda$ increases (see Figure \ref{fig:vary-lambda}).
\end{property}

\begin{figure}[p]
    \centering
    \includegraphics[width=\maxwidth{0.9\textwidth}]{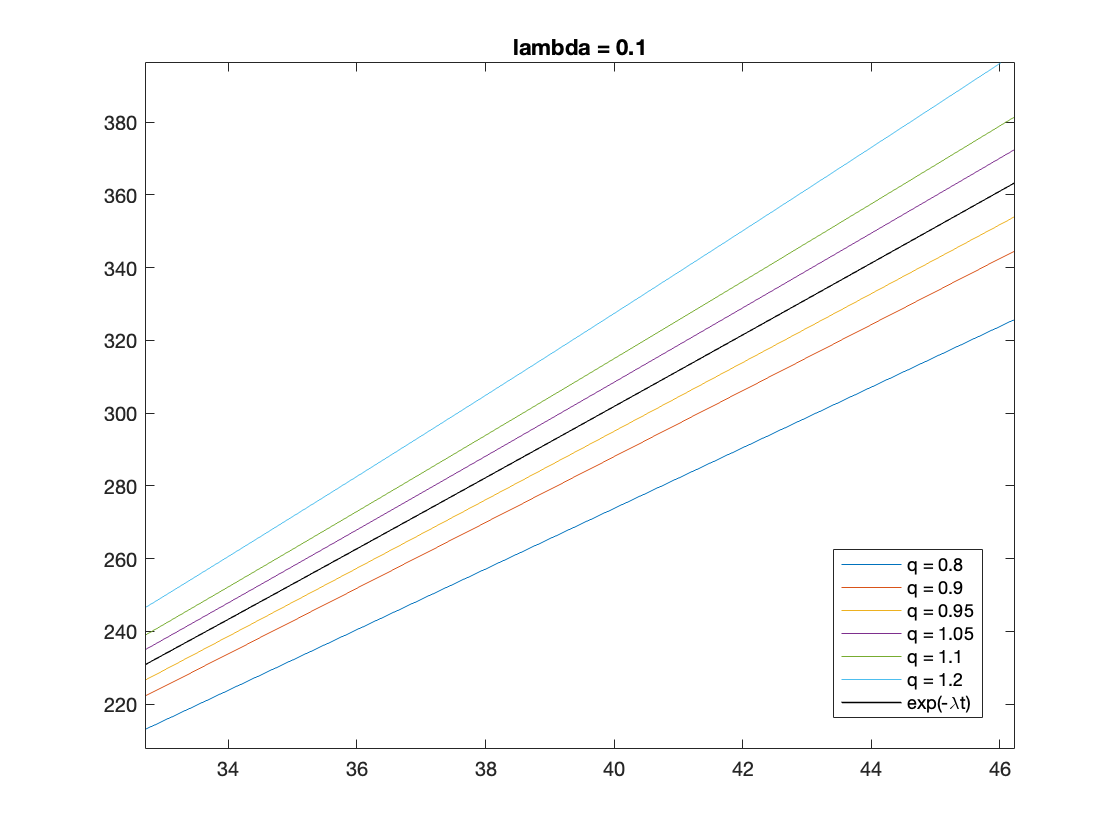}
    \caption{Plot showing $I_q(t)$ vs $q$. Here the parameter $\lambda =0.1$, but the behavior is unchanged when varying $\lambda$.}\label{fig:vary-q}
\end{figure}

\begin{figure}[p]
    \centering
    \begin{subfigure}{6.5in}
        \centering
        \includegraphics[width=6.5in,height=4.2in,keepaspectratio,trim={1in 2.5in 1in 2.5in}]{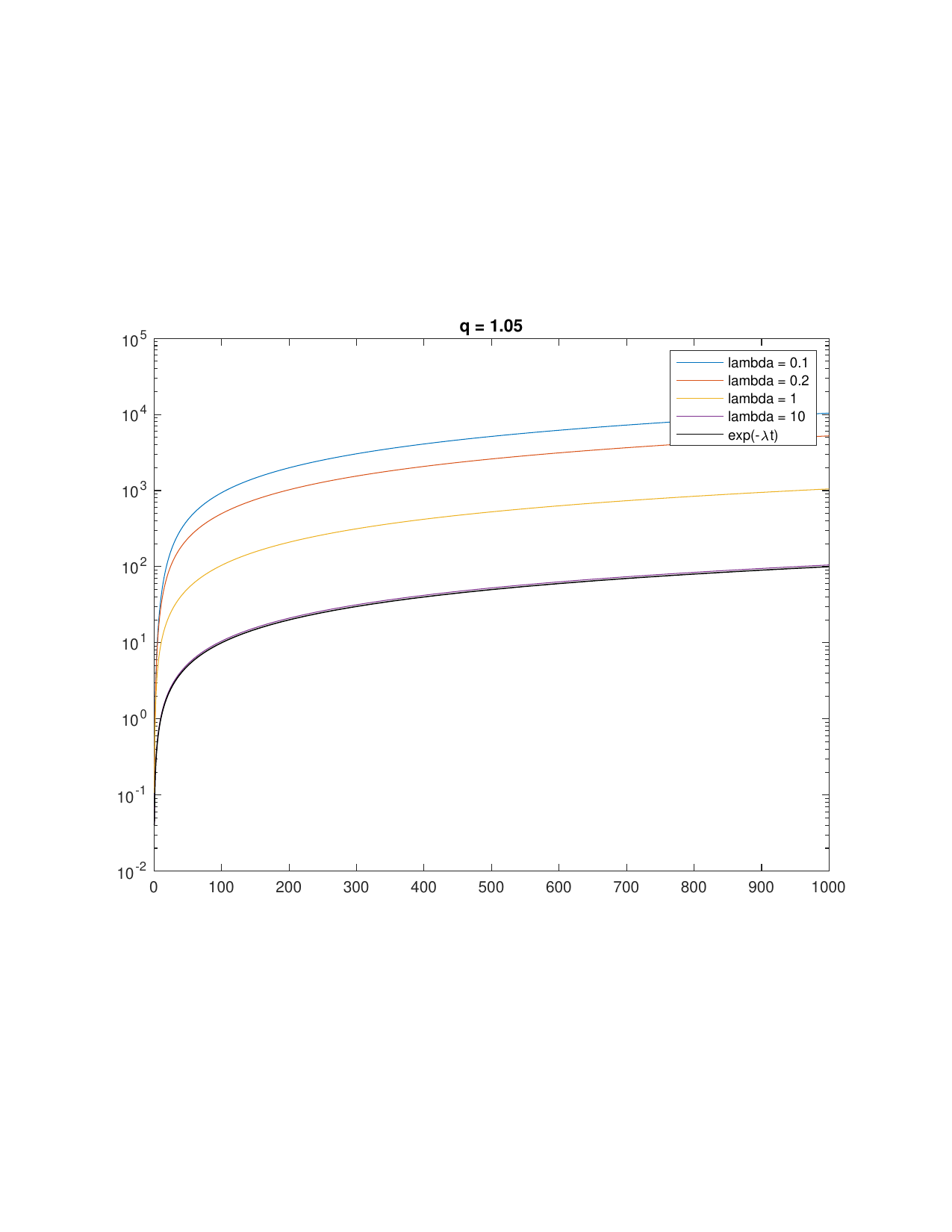}
    \end{subfigure}
    
    \begin{subfigure}{6.5in}
        \centering
        \includegraphics[width=6.5in,height=4.2in,keepaspectratio,trim={1in 2.5in 1in 2.5in}]{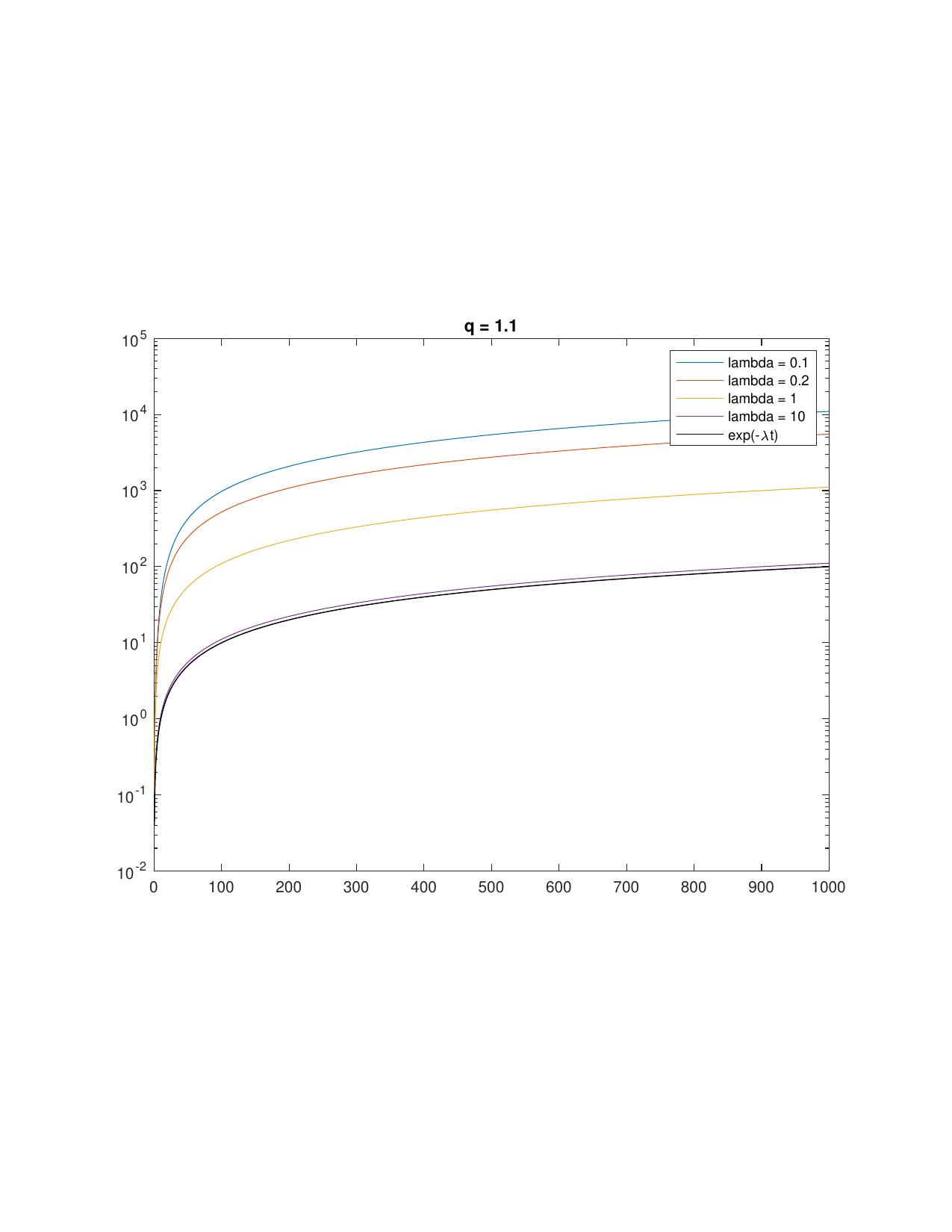}
    \end{subfigure}
    \caption[short]{Plots of $I_q(t)$ vs $\lambda$ for $q \approx 1$.}\label{fig:vary-lambda}
\end{figure}

\newpage
\bibliography{TsallisDIAbib}

\begin{thebibliography}{10}

\bibitem{Kraichnan-1961}
R.~H. Kraichnan, ``Dynamics of nonlinear stochastic systems,'' {\em Journal of Mathematical Physics}, vol.~2, no.~1, pp.~124--148, 1961.

\bibitem{Kraichnan-1965}
R.~H. Kraichnan, ``{Lagrangian-history closure approximation for turbulence},'' {\em The Physics of Fluids}, vol.~8, no.~4, pp.~575--598, 1965.

\bibitem{Kraichnan-1964}
R.~H. Kraichnan, ``{Kolmogorov's hypotheses and Eulerian turbulence theory},'' {\em The Physics of Fluids}, vol.~7, pp.~1723--1734, 1964.

\bibitem{MW-93}
C.~Y. Mou and P.~B. Weichman, ``Spherical model for turbulence,'' {\em Physical Review Letters}, vol.~70, pp.~1101--1104, 1993.

\bibitem{MW-95}
C.~Y. Mou and P.~B. Weichman, ``{Multicomponent turbulence, the spherical limit, and non-Kolmogorov spectra},'' {\em Physical Review E}, vol.~52, pp.~3738--3796, 1995.

\bibitem{Eyink}
G.~L. Eyink, ``{Large-N limit of the ``spherical model'' of turbulence},'' {\em Physical Review E}, vol.~49, pp.~3990--4002, 1994.

\bibitem{BKS}
B.~K. Shivamoggi, M.~Taylor, and S.~Kida, ``{On Some Mathematical Aspects of the Direct-Interaction Approximation in Turbulence Theory},'' {\em Journal of Mathematical Analysis and Applications}, vol.~229, no.~2, pp.~639--658, 1999.

\bibitem{BT}
B.~K. Shivamoggi and N.~Tuovila, ``{Direct interaction approximation for non-Markovianized stochastic models in the turbulence problem},'' {\em Chaos}, vol.~29, p.~063124, 2019.

\bibitem{UO}
G.~E. Uhlenbeck and L.~S. Ornstein, ``{On the Theory of the Brownian Motion},'' {\em The Physical Review}, vol.~36, pp.~823--841, 1930.

\bibitem{LGP}
I.~M. Lifshits, S.~A. Gredeskul, and L.~A. Pastur, {\em Introduction to the Theory of Disordered Systems}.
\newblock A Wiley Interscience publication, Wiley, 1988.

\bibitem{Weiss}
P.~R. Weiss, ``{The Application of the Bethe-Peierls Method to Ferromagnetism},'' {\em Phys. Rev.}, vol.~74, pp.~1493--1504, 1948.

\bibitem{Tsallis-intro}
C.~Tsallis, ``{Possible generalization of the Boltzmann-Gibbs statistics},'' {\em Journal of Statistical Physics}, vol.~52, pp.~479--487, 1988.

\bibitem{APS1989}
{L. C. Andrews, R. L. Phillips, B. K. Shivamoggi, J. K. Beck and M. L. Joshi}, ``{A statistical theory for the distribution of energy dissipation in intermittent turbulence},'' {\em Physics of Fluids A}, vol.~1, no.~6, pp.~999--1006, 1989.

\bibitem{AS1990}
L.~C. Andrews and B.~K. Shivamoggi, ``The gamma distribution as a model for temperature dissipation in intermittent turbulence,'' {\em Physics of Fluids A}, vol.~2, no.~1, pp.~105--110, 1990.

\bibitem{Doob}
J.~L. Doob, ``{The Brownian Movement and Stochastic Equations},'' in {\em {Selected Papers in Noise and Stochastic Processes}}, Dover, 1954.

\bibitem{WW}
G.~Wilk and Z.~Wlodarcyzk, ``{Interpretation of the non-extensivity parameter {$q$} in some applications of Tsallis statistics and L{\`e}vy distributions},'' {\em Physical Review Letters}, vol.~84, no.~13, pp.~2770--2773, 2000.

\bibitem{Beck}
C.~Beck, ``Dynamical foundations of non-extensive statistical mechanics,'' {\em The Physical Review Letters}, vol.~87, pp.~180601.1--.4, 2001.

\bibitem{Zwanzig}
R.~Zwanzig, {\em {Nonequilibrium Statistical Mechanics}}.
\newblock Oxford University Press, 2001.

\bibitem{Keller}
J.~B. Keller, ``Stochastic equations and wave propagation in random media,'' in {\em Stochastic Processes in Mathematical Physics and Engineering} (R.~Bellman, ed.), no.~16 in Proceedings of Symposia in Applied Mathematics, American Mathematical Society, 1964.

\bibitem{Kraichnan-1958}
R.~H. Kraichnan, ``Irreversible statistical mechanics of incompressible hydromagnetic turbulence,'' {\em The Physical Review}, vol.~109, pp.~1407--1422, 3 1958.

\bibitem{Kraichnan-1959}
R.~H. Kraichnan, ``{The structure of isotropic turbulence at very high Reynolds numbers},'' {\em Journal of Fluid Mechanics}, vol.~5, no.~4, p.~497–543, 1959.

\bibitem{Huang}
K.~Huang, {\em {Introduction to Statistical Physics}}.
\newblock CRC Press, {II}~ed., 2010.

\bibitem{TsallisReview}
C.~Tsallis, {\em Introduction to Nonextensive Statistical Mechanics}.
\newblock Springer, 2nd~ed., 2023.

\bibitem{Andrews}
L.~C. Andrews, {\em {Special Functions of Mathematics for Engineers}}.
\newblock SPIE Press, 1998.

\end{thebibliography}
\bibliographystyle{ieeetr}

\end{document}